\newtheorem{lemma}{Lemma}
\newtheorem{theorem}{Theorem}
\begin{document}

\vspace*{5mm}

\noindent
\textbf{\LARGE On Perfect Codes in the Johnson Graph
\footnote{This work was supported in part by the Israel
Science Foundation (ISF), Jerusalem, Israel, under Grant 230/08.}
}
\thispagestyle{fancyplain} \setlength\partopsep {0pt} \flushbottom
\date{}

\vspace*{5mm}
\noindent
\textsc{Natalia Silberstein, Tuvi Etzion} \hfill \texttt{\{natalys,etzion\}@cs.technion.ac.il} \\
{\small Computer Science Dept., Technion - Israel Institute of Technology, Haifa, ISRAEL}

\medskip

\begin{center}
\parbox{11,8cm}{\footnotesize
\textbf{Abstract.} In this paper we consider the existence of nontrivial
perfect codes in the Johnson graph $J(n,w)$.
We present  combinatorial and number theory techniques to provide
necessary conditions for existence of such codes and reduce the range
of parameters in which $1$-perfect and $2$-perfect codes may exist.
}
\end{center}

\baselineskip=0.9\normalbaselineskip

\section{Introduction}

Codes which attain the sphere packing bound are called perfect codes.
The Hamming metric and the Johnson metric are the most important metrics in
coding theory on which
perfect codes are defined.
While for the Hamming space all perfect codes over finite fields
are known~\cite{MWS79}, in the Johnson space it was conjectured
by Delsarte in
1970's~\cite[p. 55]{Del73} that there are no nontrivial perfect codes.
The general nonexistence
proof still remains an open problem, although many attempts to solve the problem
were made, e.g.\cite{Ban77,Ham82,Roos83,Mar92,Etz96,ESc04,Etz06,Gor06}.

The \emph{Johnson space} $V_{w}^{n}$ consists of all $w$-subsets
of a fixed $n$-set $N=\left\{ 1,2,...,n\right\}$, for given positive
integers $n$ and $w$ such that $0\leq w\leq n$. With the Johnson space
we associate the \emph{Johnson graph} $J(n,w)$ with the vertex set
$V_{w}^{n}$, where two  $w$-subsets
are adjacent if and only if their intersection is of size $w-1$.
A code $C$ of such $w$-subsets is called an $e$-\emph{perfect code}
in $J(n,w)$ if the $e$-spheres with centers at the codewords of
$C$ form a partition of $V_{w}^{n}$. In other words, $C$ is an
$e$-perfect code if for each element $v\in V_{w}^{n}$ there exists
a unique codeword $c\in C$ such that the distance (in the graph) 
between $v$ and $c$ is at most $e$.

A code $C$ in $J(n,w)$ can be described as a collection of $w$-subsets
of $N$, but it can be also described as a binary code of length $n$
and constant weight $w$. From a $w$-subset $S$ we construct
a \emph{characteristic} binary
vector of length $n$ and weight $w$ with \textit{ones} in the positions of
$S$ and \textit{zeroes} in the positions of $N\setminus S$.
The \emph{Johnson distance}  between two $w$-subsets
is half of the number of coordinates in which their characteristic vectors differ.
In the sequel we will use a mixed language of sets and binary vectors.

There are three families of trivial perfect codes  in $J(n,w)$:
 $V_{w}^{n}$ is $0$-perfect;
any $\left\{ v\right\} $, $v\in V_{w}^{n}$, $w\leq n-w$, is $w$-perfect;
and if $n=2w$, $w$ odd, any pair of disjoint $w$-subsets is $e$-perfect
with $e=\frac{1}{2}(w-1)$.

In this paper we are interested in three problems concerning perfect codes in $J(n,w)$:
the existence of $1$-perfect codes, the existence of perfect codes in $J(2w,w)$, 
and improving the Roos bound~\cite{Roos83}
given by:
\begin{theorem}
\label{roos83} If an $e$-perfect code
in $J(n,w)$, $n\geq2w$, exists, then $$n\leq(w-1)\frac{2e+1}{e}.$$
\end{theorem}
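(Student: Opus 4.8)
The plan is to exploit that a perfect code is completely regular and to convert the partition property --- the $e$-spheres tile $V_w^n$ --- into a spectral constraint of Lloyd type on $J(n,w)$. First I would discard the degenerate regimes ($w \le e$, or $n - w \le e$ which by $n \ge 2w$ is the same thing), where a single codeword already fills an $e$-ball, and assume $e \ge 1$, $w \ge 2e+1$, $n - w \ge 2e+1$. In that range the covering radius of $C$ equals $e$, and since the $e$-spheres are pairwise disjoint and also cover everything, the minimum distance of $C$ is exactly $2e+1$: a $w$-set at distance $e+1$ from a codeword $c$ lies in the $e$-sphere of a second codeword $c'$, forcing $d(c,c') \le 2e+1$. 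The second step is to show that the partition of $V_w^n$ into the classes $V_0 := C$ and $V_1, \ldots, V_e$, where $V_i$ consists of the $w$-sets at distance exactly $i$ from $C$, is equitable. Using uniqueness of the nearest codeword together with minimum distance $2e+1$, a vertex of $V_i$ has exactly $i^2$ neighbors in $V_{i-1}$, exactly $(w-i)(n-w-i)$ neighbors in $V_{i+1}$ when $i<e$ (when $i=e$ those $(w-e)(n-w-e)$ neighbors stay in $V_e$ instead, being at distance $e$ from another codeword), and the rest in $V_i$ --- precisely the intersection numbers of the distance-regular graph $J(n,w)$. Hence the $(e+1)\times(e+1)$ quotient matrix $B$ is the tridiagonal matrix with subdiagonal $1,4,\ldots,e^2$, superdiagonal $(w-1)(n-w-1),\ldots,(w-e+1)(n-w-e+1)$, constant row sums $w(n-w)$, and $B_{ee}=w(n-w)-e^2$.

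Next I would invoke the standard fact that every eigenvalue of the quotient matrix of an equitable partition is an eigenvalue of the graph. Since $B$ is similar to a symmetric tridiagonal matrix with nonzero off-diagonal, it has $e+1$ distinct real eigenvalues; one of them is $\theta_0 = w(n-w)$ (the all-ones row sum), so the remaining $e$ are distinct members of the eigenvalue list $\{\theta_j = (w-j)(n-w-j) - j : 1 \le j \le w\}$ of $J(n,w)$. Writing $\det(xI-B) = (x-\theta_0)\,\Psi_e(x)$, the monic degree-$e$ Lloyd polynomial $\Psi_e$ must have all $e$ of its (distinct) roots among $\theta_1 > \theta_2 > \cdots > \theta_w$. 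Since $j \mapsto \theta_j$ is strictly decreasing, the key extra point is that $\theta_w = -w$ is \emph{not} a root of $\Psi_e$: this is checked by evaluating $\Psi_e$ at $-w$ (for $e=1$, $\Psi_1(x)=x+1$, so $\Psi_1(-w)=1-w\neq 0$; for general $e$ it is a nonzero polynomial in $w$, the only care being a handful of boundary cases such as $n=2w$). Consequently every root of $\Psi_e$ is at least $\theta_{w-1} = n-3w+2$, and as $\Psi_e$ is monic of degree $e$ this gives the single scalar inequality $(-1)^e\,\Psi_e(n-3w+2) \ge 0$.

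It remains to evaluate $\Psi_e$ at $x=\theta_{w-1}$ and read off the bound: the assertion is that $(-1)^e\,\Psi_e(n-3w+2)$ equals a positive quantity times $\big((w-1)(2e+1)-en\big)$, so the inequality above becomes exactly $n \le (w-1)(2e+1)/e$. This is transparent for $e=1$, where $\Psi_1(n-3w+2)=n-3(w-1)$; equivalently the condition ``$\theta_j=-1$ for some $j$'' reads $(w-j)(n-w-j)=j-1$, and (since $n\ge 2w$ forces $j\ge 2$, hence $p:=w-j\ge 1$ and $q:=n-w-j\ge 1$) one gets $w=p(q+1)+1$ and $n=2pq+p+q+2\le 3p(q+1)=3(w-1)$. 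For $e=2$ one computes $\Psi_2(n-3w+2)=2(w-2)\big(5(w-1)-2n\big)$, giving $n\le 5(w-1)/2$. I expect the real obstacle to be exactly this step for general $e$: obtaining a workable closed form, or a three-term recurrence, for the Lloyd polynomial $\Psi_e$ of $J(n,w)$, and showing that its value at $\theta_{w-1}$ carries the factor $(w-1)(2e+1)-en$, together with the bookkeeping needed to exclude $\theta_w$ as an eigenvalue of $B$. A purely combinatorial version of the setup is also available --- around a codeword the distance-$(2e+1)$ ``outward directions'' are organized into Steiner systems $S(e+1,2e+1,w)$ and $S(e+1,2e+1,n-w)$ --- but those yield only divisibility conditions, so the spectral input seems to be what actually forces the bound.
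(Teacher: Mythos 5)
First, a point of reference: the paper does not prove Theorem~\ref{roos83} at all --- it is quoted from Roos's note \cite{Roos83} and used as a black box --- so there is no internal proof to measure your argument against, and I can only assess the proposal on its own terms. Your setup is sound as far as it goes: for a nontrivial $e$-perfect code the minimum distance is indeed exactly $2e+1$; the distance partition $V_0,\dots,V_e$ is equitable with precisely the tridiagonal quotient matrix you describe (your argument that the $(w-e)(n-w-e)$ ``outward'' neighbours of a vertex of $V_e$ fall back into $V_e$ is the right one); the $e$ non-principal eigenvalues of $B$ are distinct and must lie among $\theta_1>\dots>\theta_w$; and the resulting inequality $(-1)^e\Psi_e(\theta_{w-1})\ge 0$ does reproduce $n\le 3(w-1)$ for $e=1$ and $n\le 5(w-1)/2$ for $e=2$ (I verified your identity $\Psi_2(n-3w+2)=2(w-2)\bigl(5(w-1)-2n\bigr)$, and the analogous factorization also checks out numerically for $e=3$).

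The genuine gap is the one you flag yourself: for general $e$ the entire conclusion rests on two unproved claims, namely (i) $\Psi_e(-w)\neq 0$, so that $\theta_w$ can be excluded and every root of $\Psi_e$ bounded below by $\theta_{w-1}=n-3w+2$, and (ii) the evaluation $(-1)^e\Psi_e(\theta_{w-1})=C_e(w,n)\bigl((w-1)(2e+1)-en\bigr)$ with $C_e(w,n)>0$. Both require an explicit handle on the Lloyd polynomial of $J(n,w)$ (a Hahn/Eberlein polynomial) --- a closed form or a three-term recurrence from which the sign and the factor $(w-1)(2e+1)-en$ can be extracted --- and neither is supplied. As written, the argument is a correct and plausible strategy that actually proves the theorem only for $e\le 2$. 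I would also push back on your closing remark: Roos's original proof is a short combinatorial argument built exactly on the Steiner systems $S(e+1,2e+1,w)$ and $S(e+1,2e+1,n-w)$ induced around a codeword by the codewords at distance $2e+1$, the structure you set aside as yielding ``only divisibility conditions''; the bound does not in fact require the spectral machinery, and completing your route is likely harder than the combinatorial one.
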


The rest of this paper is organized as follows. In Section \ref{sec:1-perfect}
we present  new divisibility
conditions, which are based on the connection between perfect codes
in $J(n,w)$  and block designs. Based on this connection we 
introduce an improvement
of the Roos bound for $1$-perfect codes. In Section \ref{sec: 2-perfect}
we examine $2$-perfect codes in $J(2w,w)$
and present necessary conditions for existence of such codes, using Pell equation.
Finally in Section~\ref{sec:conclusion}  we
summarize our results.

\vspace{-0.2 cm}
\section{One-perfect codes in $J(n,w)$}
\label{sec:1-perfect}

In this section we consider first the  connection between  perfect codes in the
Johnson graph and block designs.

Let $t,n,w,\lambda$ be integers with
$n>w\geq t$ and $\lambda>0$, and let $N$ be an $n$-set. 
A $t-(n,w,\lambda)$ \emph{design}
is a collection $C$ of distinct $w$-subsets (called \emph{blocks}) of $N$
with the property that any $t$-subset of $N$ is contained in exactly
$\lambda$ blocks of $C$. Clearly, $C$ is a code in $J(n,w)$. On the other hand,
the largest $t$  for which a code  $C$ in $J(n,w)$ is a $t$-design
is called the \emph{strength} of the code.

In the sequel we assume w.l.o.g. that $n=2w+a$ for some $a\geq 0$, since the complement
of an $e$-perfect code in $J(n,w)$ is an $e$-perfect code in $J(n,n-w)$~\cite{Etz96}.

\begin{lemma}\cite{Etz06}
\label{sphere}
If the code $C$ in $J(2w+a,w)$ has strength $\varphi$ then for
each $t$, $0 \leq t\leq \varphi$, it
is a $t$-design $t-(2w+a,w,\lambda_{t})$ with
\begin{equation*}
\lambda_{t}=\frac{\binom{2w+a-t}{w-t}}{\Phi_{e}(w,a)},
\end{equation*}
where $\Phi_{e}(w,a)$ is the size of the sphere with radius $e$.
\end{lemma}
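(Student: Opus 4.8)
The plan is to combine two facts: (i) that an $e$-perfect code, by definition, has its radius-$e$ spheres partitioning $V_{w}^{2w+a}$, which pins down $|C|$; and (ii) that a code of strength $\varphi$ is automatically a $t$-design for every $t\le\varphi$, whose index $\lambda_{t}$ is then forced by a double count. Matching the two resulting expressions for $\lambda_{t}$ through a binomial identity gives the claimed formula.

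First I would record the distance structure of $J(2w+a,w)$: the number of $w$-subsets at distance exactly $i$ from any fixed $w$-subset is $\binom{w}{i}\binom{w+a}{i}$ (drop $i$ of the $w$ elements, add $i$ of the remaining $w+a$), independent of the chosen vertex, so the sphere size $\Phi_{e}(w,a)=\sum_{i=0}^{e}\binom{w}{i}\binom{w+a}{i}$ is well defined and center-free. Since $C$ is $e$-perfect, the radius-$e$ spheres centered at the codewords partition $V_{w}^{2w+a}$, and counting vertices gives
\[
|C|\,\Phi_{e}(w,a)=\binom{2w+a}{w},\qquad\text{hence}\qquad |C|=\frac{\binom{2w+a}{w}}{\Phi_{e}(w,a)}.
\]

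Next I would show that strength $\varphi$ forces the $t$-design property for all $t\le\varphi$. Fix a $t$-subset $T$ and double count the pairs $(S,B)$ with $T\subseteq S\subseteq B$, $|S|=\varphi$, $B\in C$: summing over the $\varphi$-subsets $S\supseteq T$ (each in exactly $\lambda_{\varphi}$ blocks) gives $\binom{2w+a-t}{\varphi-t}\lambda_{\varphi}$, while summing over blocks $B\supseteq T$ gives $(\#\text{blocks through }T)\cdot\binom{w-t}{\varphi-t}$; so the number of blocks containing $T$ equals the constant $\lambda_{\varphi}\binom{2w+a-t}{\varphi-t}/\binom{w-t}{\varphi-t}$, i.e. $C$ is a $t\text{-}(2w+a,w,\lambda_{t})$ design (the case $t=0$ being the trivial $\lambda_{0}=|C|$). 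A second double count of the pairs $(T,B)$ with $|T|=t$, $T\subseteq B\in C$, yields $\binom{2w+a}{t}\lambda_{t}=|C|\binom{w}{t}$, so $\lambda_{t}=|C|\binom{w}{t}/\binom{2w+a}{t}$.

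Finally I would substitute the value of $|C|$ from the first step and simplify using the identity $\binom{2w+a}{w}\binom{w}{t}=\binom{2w+a}{t}\binom{2w+a-t}{w-t}$ (both sides count a $w$-subset together with a distinguished $t$-subset of it), which collapses the expression to $\lambda_{t}=\binom{2w+a-t}{w-t}/\Phi_{e}(w,a)$, as required. There is no serious obstacle; the only points worth a line of care are the passage from strength $\varphi$ to the $t$-design property for smaller $t$, and the remark that $\Phi_{e}(w,a)$ is genuinely center-independent so that the perfect-packing count is legitimate — both standard.
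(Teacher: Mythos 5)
Your proof is correct. The paper quotes this lemma from [Etz06] without reproducing a proof, and your argument is the standard one: the perfect-packing count $|C|=\binom{2w+a}{w}/\Phi_{e}(w,a)$, the descent from strength $\varphi$ to the $t$-design property for all $t\leq\varphi$ via double counting, and the identity $\binom{2w+a}{w}\binom{w}{t}=\binom{2w+a}{t}\binom{2w+a-t}{w-t}$ to put $\lambda_{t}$ in the stated form — all steps are sound.
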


The strength of
a possible $e$-perfect code $C$ in $J(2w+a,w)$ can be used to exclude its existence.
We define the polynomial
\begin{equation}
\sigma_{e}(w,a,t)=\sum_{i=0}^{e}(-1)^{i}\binom{t}{i}
\sum_{j=0}^{e-i}\binom{w-i}{j}
\binom{w+a-t+i}{i+j}.\label{poly}
\end{equation}
It was proved in \cite{ESc04} that  $C$ is an $e$-perfect
code in $J(2w+a,w)$ with strength $\varphi$ if $\varphi$ is the
smallest positive integer  for which $\sigma_{e}(w,a,\varphi+1)=0$.

The  expression for the strength of an $1$-perfect
code can be easily calculated  from (\ref{poly}). We use this expression
to obtain  divisibility conditions for such codes and
to improve the Roos bound on the length of $1$-perfect codes in the Johnson graph.

\begin{theorem}
\label{thm:divisibility}
Assume  there exists an $1$-perfect code
$C$ in $J(2w+a,w)$ with strength $\varphi=w-d$ for some $d\geq0$. Then $d>1$,
$d\equiv0\mbox{ or }1(\mbox{mod }3)$, $w-d\equiv0,1,4,\mbox{ or }9(\mbox{mod }12)$, and
\begin{equation}\frac{\prod_{i=0}^{d-2}(wd-(d+i(d-1)))}{(d-1)!(d-1)^{d-1}d(w-d+1)}
    \in\mathbb{Z}.\label{divis}\end{equation}
\end{theorem}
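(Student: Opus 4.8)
The plan is to turn the strength condition into an explicit quadratic, read off the structural identities it forces, identify \eqref{divis} with a single integrality statement coming from Lemma~\ref{sphere}, and then grind out the congruences by a valuation analysis. Concretely: specializing $e=1$ in the polynomial \eqref{poly} and simplifying, one gets, with $n:=2w+a$, that $\sigma_1(w,a,t)=t^2-(n+1)\,t+\bigl(w(n-w)+1\bigr)$, whose constant term is exactly the sphere size $\Phi_1(w,a)$. By the result of \cite{ESc04} quoted above, $\varphi+1$ is the least integer $\ge 2$ annihilating this quadratic; in particular $\sigma_1(w,a,\cdot)$ has the integer root $\varphi+1$, hence (being a monic integer quadratic) has two integer roots $\varphi+1$ and $r$ with $(\varphi+1)+r=n+1$ and $(\varphi+1)\,r=\Phi_1(w,a)$. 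Evaluating $\sigma_1(w,a,\cdot)$ at $t=1$ gives $(w-1)(n-w-1)>0$ and at $t=w$ gives $1-w<0$, so the smaller root lies strictly in $(1,w)$; since $\varphi+1$ is that smaller root, $\varphi+1\le w-1$, i.e. $d=w-\varphi\ge 2$, which is the claim $d>1$.

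Next I substitute $t=\varphi+1=w-d+1$ into $\sigma_1(w,a,t)=0$ and solve for $w$; this collapses to the relation $w=(d-1)(a+d)+1$. Setting $k:=a+d=(w-1)/(d-1)$ (so, incidentally, $d-1\mid w-1$), I then read off $\varphi=(d-1)(k-1)$, the second root $r=dk+1=n-\varphi$, and the factorization of the sphere size $\Phi_1(w,a)=(\varphi+1)(dk+1)=(w-d+1)(dk+1)$.

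For the divisibility statement \eqref{divis} I apply Lemma~\ref{sphere} with $t=\varphi$: the code is a $\varphi$-$(n,w,\lambda_\varphi)$ design with $\lambda_\varphi=\binom{n-\varphi}{w-\varphi}/\Phi_1(w,a)=\binom{dk+1}{d}/\bigl((w-d+1)(dk+1)\bigr)=\binom{dk}{d-1}/\bigl(d(w-d+1)\bigr)$, using $\binom{dk+1}{d}=\tfrac{dk+1}{d}\binom{dk}{d-1}$ in the last step. On the other hand, in the numerator of \eqref{divis} each factor equals $wd-(d+i(d-1))=d(w-1)-i(d-1)=(d-1)(dk-i)$, so the product is $(d-1)^{d-1}\prod_{i=0}^{d-2}(dk-i)=(d-1)^{d-1}(d-1)!\binom{dk}{d-1}$; dividing by $(d-1)!\,(d-1)^{d-1}\,d\,(w-d+1)$ reproduces exactly $\lambda_\varphi$. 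Since $\lambda_\varphi$ is a positive integer, \eqref{divis} follows.

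The remaining, and by far the hardest, part is the two congruence conditions $d\equiv 0,1\pmod 3$ and $w-d\equiv 0,1,4,9\pmod{12}$; note that $\{0,1,4,9\}$ are precisely the quadratic residues modulo $12$, so the latter says $w-d=(d-1)(k-1)$ is a square both $\bmod\,4$ and $\bmod\,3$. Here I again use Lemma~\ref{sphere}, but now exploit that \emph{all} of $\lambda_\varphi,\lambda_{\varphi-1},\lambda_{\varphi-2},\dots$ are integers; from $\lambda_{t-1}=\lambda_t\,\frac{n-t+1}{w-t+1}$ one gets $\lambda_{\varphi-s}=\lambda_\varphi\prod_{j=1}^{s}\frac{dk+1+j}{d+j}$, and I analyze the $2$-adic and $3$-adic valuations of the right-hand side for the first few $s$ — tracking which of the consecutive integers $dk+2,\dots,dk+1+s$ and $d+1,\dots,d+s$ are divisible by $2$ and by $3$, and comparing with $v_2$ and $v_3$ of $\binom{dk}{d-1}$ via Kummer's theorem. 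The aim is to show that if $d\equiv 2\pmod 3$, or if $w-d$ fails one of the two square conditions, then some $\lambda_{\varphi-s}$ has a factor $2$ or $3$ left uncancelled in its denominator, contradicting integrality; in particular this rules out $d=2$. I expect this modular bookkeeping to be the crux: everything preceding it is a direct computation together with Lemma~\ref{sphere} and the \cite{ESc04} characterization of the strength.
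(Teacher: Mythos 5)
Your treatment of $d>1$, of the relation $w=(d-1)(a+d)+1$ (equivalently $a=\frac{w-d^2+d-1}{d-1}$), and of the divisibility condition \eqref{divis} is correct and in substance the same as the paper's: the paper also applies Lemma~\ref{sphere} at $t=\varphi=w-d$, writes $\Phi_1(w,a)=(w+a+d)(w-d+1)$, and reduces $\lambda_{w-d}=\binom{w+a+d-1}{d-1}/\bigl(d(w-d+1)\bigr)$ to \eqref{divis} via $w+a+d-1=\frac{wd-d}{d-1}$. Your Vieta-style derivation of $\varphi+1$ as the smaller integer root of $\sigma_1(w,a,t)=t^2-(n+1)t+\Phi_1(w,a)$, with sign checks at $t=1$ and $t=w$, is a clean equivalent of the paper's explicit quadratic-formula expression for the strength, and your identification of the numerator of \eqref{divis} with $(d-1)^{d-1}(d-1)!\binom{dk}{d-1}$ checks out.

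The genuine gap is in the two congruence conditions. The paper does not derive these from scratch: it imports the known necessary condition from \cite{ESc04} that a $1$-perfect code in $J(n,w)$ forces $w\equiv n-w\equiv 1\pmod{12}$ or $w\equiv n-w\equiv 7\pmod{12}$, whence $12\mid a$ and $w\equiv 1\pmod 6$; then $6\mid d^2-d$ gives $d\equiv 0,1\pmod 3$, and $w-d\equiv(d-1)^2\pmod{12}$ gives $w-d\in\{0,1,4,9\}\pmod{12}$. You instead propose to re-derive these congruences by a $2$-adic and $3$-adic valuation analysis of $\lambda_{\varphi-s}$ for small $s$, and you explicitly flag this as the unfinished ``crux.'' As written, that step is a plan, not a proof: you have not exhibited the uncancelled factor of $2$ or $3$ in any $\lambda_{\varphi-s}$, and it is far from clear that tracking only a few consecutive ratios $\frac{dk+1+j}{d+j}$ suffices — the \cite{ESc04} result is itself a nontrivial theorem obtained from a more careful design-theoretic analysis. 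Either carry out that valuation argument in full (and verify it actually closes all residue classes you need to exclude), or do what the paper does and quote the \cite{ESc04} congruences, from which the two conditions follow in two lines.
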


\begin{proof} Assume that there exists an $1$-perfect code $C$
in $J(2w+a,w)$. Therefore, by (\ref{poly}), the strength of $C$
is $w-d=\frac{2w+a-1-\sqrt{(a+1)^{2}+4(w-1)}}{2}$.
Hence it follows that $d>1$ and
\begin{equation}a=\frac{w-d^{2}+d-1}{d-1}.\label{a(w,d)}
\end{equation}

In \cite{ESc04} it was proved that if there exists an $1$-perfect code in
$J(n,w)$ then either $w\equiv n-w\equiv1(\mbox{mod }12)$ or
$w\equiv n-w\equiv7(\mbox{mod }12)$.
In particular, $w\equiv1(\mbox{mod }6)$,   $12$ divides $a$,
and hence by (\ref{a(w,d)}), $6$ divides $d^{2}-d$.
Therefore, $d\equiv0\mbox{ or }1(\mbox{mod }3)$, i.e.,
$d\equiv0,1,3,4,6,7,9\mbox{ or }10(\mbox{mod }12).$ Since $12$ divides $a$,
from (\ref{a(w,d)}) it follows that $w\equiv d^{2}-d+1(\mbox{mod }12)$,
i.e., $w-d\equiv (d-1)^{2}(\mbox{mod }12)$.
Therefore,  $w-d\equiv0,1,4,\mbox{ or }9(\mbox{mod }12)$.

Now, using (\ref{a(w,d)}) we write the size of the sphere with radius $1$ as follows:
$\Phi_{1}(w,a)=1+w(w+a)=(w+a+d)(w-d+1)$.
By Lemma \ref{sphere}, for $t=w-d$ we have
$\lambda_{w-d}=\frac{\binom{w+a+d}{w+a}}{(w+a+d)(w-d+1)}
=\frac{\binom{w+a+d-1}{d-1}}{d(w-d+1)}\in\mathbb{Z}$.
We simplify the last expression, by using  $w+a+d-1=\frac{wd-d}{d-1}$
(implied by (\ref{a(w,d)})),
and  obtain the  divisibility condition~(\ref{divis}).
\end{proof}

By Theorem \ref{roos83} it follows that if an $1$-perfect code exists
in $J(2w+a,w)$, then $2w+a\leq3(w-1)$ and thus $a\leq w-3$.
We use the divisibility condition~(\ref{divis}) in Theorem~\ref{thm:divisibility} in
order to improve this bound.

\begin{theorem}
\label{Roos} If an $1$-perfect code exists in $J(2w+a,w)$,
then $a<\frac{w}{11}$.
\end{theorem}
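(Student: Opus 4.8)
The plan is to turn Theorem~\ref{thm:divisibility} into a fully explicit description of the admissible parameters, squeeze out everything that divisibility condition~(\ref{divis}) gives, and then close the remaining gap with the full design conditions of Lemma~\ref{sphere}.

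First I would reparametrise. By Theorem~\ref{thm:divisibility} the strength is $\varphi=w-d$ with $d\ge 2$, and (\ref{a(w,d)}) forces $(d-1)\mid(w-1)$; writing $w-1=(d-1)k$ we get $w=(d-1)k+1$, $a=k-d$ (so $k\ge d$), and $12\mid a$ becomes $k\equiv d\pmod{12}$. In these variables $a<\frac{w}{11}$ is equivalent to $(12-d)k<11d+1$, which is automatic for $d\ge 12$; so it remains to treat $d\in\{3,4,6,7,9,10\}$ (the integers $2\le d\le 11$ with $d\equiv 0$ or $1\pmod 3$), and, arguing by contradiction, to assume $k\ge\frac{11d+1}{12-d}$.

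Next I would exploit~(\ref{divis}). Setting $\ell:=w-d+1=(d-1)(k-1)+1$, one has $\lambda_{w-d}=\binom{dk}{d-1}/(d\,\ell)$, so (\ref{divis}) says $d\,\ell\mid\binom{dk}{d-1}$; since $dk\equiv k+d-2\pmod\ell$, reducing $\binom{dk}{d-1}(d-1)!=\prod_{j=0}^{d-2}(dk-j)$ modulo $\ell$ gives $\ell\mid\prod_{s=0}^{d-2}(k+s)$, and then, after multiplying by $(d-1)^{d-1}$ and reducing modulo $\ell$ once more, $\ell\mid P(d):=\prod_{r=1}^{d-1}\bigl((d-1)r-1\bigr)$. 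Together with $\gcd(\ell,d-1)=1$ and $\ell\equiv(d-1)^2+1\pmod{12(d-1)}$, this cuts the possibilities for $\ell$, hence for $k$, down to a short explicit list for each of the six values of $d$; for $d=3,4$ (and, it turns out, $d=9$) nothing with $a\ge\frac{w}{11}$ survives.

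Finally, for the handful of remaining candidates I would invoke Lemma~\ref{sphere} in full strength: $\Phi_1(w,a)=(dk+1)\,\ell$ must divide $\binom{n-t}{w-t}=\binom{m}{w+a}$ for every $m$ in the interval $[\,n-\varphi,n\,]=[\,dk+1,\,n\,]$, a block of $w-d$ consecutive values. Since $dk+1$ is itself one of the two factors of $\Phi_1(w,a)$ and $w+a\equiv -d\pmod p$ for any prime $p\mid dk+1$, one can pick such a $p$, apply Lucas'/Kummer's theorem to locate an $m$ in this block with $v_p\!\bigl(\binom{m}{w+a}\bigr)<v_p(\Phi_1(w,a))$, and obtain a contradiction; the few parameter sets in which this $p$-adic argument degenerates are finite in number and can be dispatched by hand. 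The hard part is exactly this last step: the bound $\ell\mid P(d)$ from~(\ref{divis}) alone is far too weak once $P(d)$ is large (for $d=6,7,10$ it still admits parameters with $a$ of the same order as $w$), so one genuinely needs the integrality of $\lambda_t$ for every $t\le\varphi$, and the real work is in showing --- preferably uniformly rather than case by case --- that $\Phi_1(w,a)$ cannot divide a run of $w-d$ consecutive binomial coefficients $\binom{m}{w+a}$ unless $a$ is small, with the exceptional base-$p$ expansions of $w+a$ requiring the most care.
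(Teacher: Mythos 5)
Your proposal is correct and follows essentially the same route as the paper: reduce to the finitely many residues $d\in\{3,4,6,7,9,10\}$ below $12$, use the divisibility condition~(\ref{divis}) to confine $\ell=w-d+1$ to an explicit finite list (your uniform statement $\ell\mid P(d)=\prod_{r=1}^{d-1}((d-1)r-1)$ is exactly the product of the g.c.d.'s the paper computes case by case, e.g.\ $4\cdot9\cdot14\cdot19\cdot24$ for $d=6$), prune by the mod-$12$ congruences and integrality of $a$, and kill the survivors (such as $w=271$, $a=48$ for $d=6$) by the full design condition of Lemma~\ref{sphere}. The only differences are cosmetic: your reparametrisation by $k$ and the Kummer-type finish systematize what the paper does by direct numerical checking of $\Phi_1(w,a)\mid\binom{n-t}{w-t}$ on the handful of remaining parameter sets.
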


\begin{proof}
Assume that there exists an $1$-perfect code
$C$ with strength $w-d$ in $J(2w+a,w)$.
We examine the divisibility condition (\ref{divis}) for  several
values of $d$ which are not pruned out by  Theorem \ref{thm:divisibility}.
\begin{itemize}
\item
Assume $d=3$. By (\ref{divis}) we have that $\frac{(3w-3)(3w-5)}{2!2^{2}3(w-2)}
=\frac{(w-1)(3w-5)}{8(w-2)}\in\mathbb{Z}$, which is impossible
since $\mbox{g.c.d.}(w-1,w-2)=1$ and $\mbox{g.c.d.}(3w-5,w-2)=1$.
Hence, $d>3$.
\item
Assume $d=4.$ By (\ref{divis}) we have that $\frac{4(w-1)(4w-7)2(2w-5)}{3!3^{3}4(w-3)}
\in\mathbb{Z}$.
Since $\mbox{g.c.d.}(w-3, w-1)\in \{1, 2\}$, $\mbox{g.c.d.}(w-3, 4w-7)\in\{1, 5\}$,
 and $\mbox{g.c.d.}(w-3, 2w-5)=1$,
it follows that
$w-3\leq 2\cdot5$. But by (\ref{a(w,d)}) we have $a=\frac{w-13}{3}$, and hence
$w>13$. Thus, $d>4$.
\item
Assume $d=6$. By (\ref{divis}) we have that
$\frac{6(w-1)(6w-11)(6w-16)(6w-21)(6w-26)}{5!5^{5}6(w-5)}\in\mathbb{Z}$.
Examining the $\mbox{g.c.d.}$ of $w-5$ with each factor of the nominator, we obtain 
that all possible factors of $w-5$ are from $\{2,3,7,19\}$. Since
$w\equiv d^{2}-d+1\equiv7(\mbox{mod }12)$, i.e., $w-5\equiv 2(\mbox{mod }12)$ we obtain that
$w-5=2\cdot7$, or $w-5=2\cdot19$, or $w-5=2\cdot7\cdot19$. Therefore $w=19$, 43
or 271. By (\ref{a(w,d)})  we have that $a=\frac{w-31}{5}$, and therefore the only 
possible value for $w$
is 271 and $a=48$.  But  $\Phi_{1}(w,a)$ does not divide $\binom{2w+a-(w-7)}{w+a}$ which
contradicts  Lemma \ref{sphere}.
Thus, $d>6$.

Similarly we obtain contradiction  for $d=7$ and $d=9$ and hence $d\geq 10$.
\item
Assume $d=10$. By (\ref{divis}) we have that

$\frac{10(w-1)(10w-19)(10w-28)(10w-37)(10w-46)
 (10w-55)(10w-64)(10w-73)(10w-82)}{9!*9^{9}*10(w-9)}\in\mathbb{Z}$.
Examining the $\mbox{g.c.d.}$ of $w-9$ with each factor of the nominator, we obtain that
all possible factors of $w-9$ are from $\{2,7,11,13,17,31,53,71\}$. By using that
$w-9\equiv -2(\mbox{mod }12)$ since $w\equiv d^{2}-d+1\equiv7(\mbox{mod }12)$,
and examining the divisibility condition
of Lemma \ref{sphere} we get the contradiction.

\end{itemize}

Hence, since $d\equiv0,1(\mbox{mod }3)$, it follows that $d\geq12$. Thus, 
by (\ref{a(w,d)}) we have
$$a\leq\frac{w-12^{2}+12-1}{11}=\frac{w-133}{11}<\frac{w}{11}.$$
\vspace{-0.2 cm}
\end{proof}
Clearly, as the value of $d$ is growing, considering the divisibility
condition~(\ref{divis}) becomes more complicated. But, the same method can be used for 
further improving the bound of Theorem \ref{Roos}.
\vspace{-0.2 cm}

\section{Two-perfect codes in $J(2w,w)$}
\label{sec: 2-perfect}
In this section we show the necessary conditions for the existence
of a $2$-perfect code in $J(2w,w)$ using Pell equation and prove that
there are no $2$-perfect codes in $J(2w,w)$ for $w\leq 1.97\times 10^{7655}$.

\begin{theorem}
\label{thm:pell}
If a $2$-perfect code $C$ exists in
$J(2w,w)$, then there is an integer $m\geq 0$  such that
\begin{description}
  \item[$(c.1)$] $w=\frac{(1+\sqrt{2})^{2m+1}+(1-\sqrt{2})^{2m+1}+6}{4}$, and
  \item[$(c.2)$] $\gamma \triangleq \sqrt{2}((1+\sqrt{2})^{2m}-(1-\sqrt{2})^{2m})+1$
is a square of some integer.
\end{description}
\end{theorem}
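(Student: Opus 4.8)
The plan is to convert the hypothesis ``$C$ is $2$-perfect in $J(2w,w)$'' into a Diophantine condition via the strength of $C$, and then to recognize that condition as a Pell equation restricted to ``pronic'' values of the auxiliary variable.

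First I would specialize to $a=0$, $e=2$ and compute $\sigma_2(w,0,t)$ explicitly from (\ref{poly}): the triple sum collapses to a polynomial of degree $4$ in $t$, and after the substitution $s:=w-t$ one finds (the degree-$3$ and degree-$4$ terms in $w$ cancelling) that
$$4\,\sigma_2(w,0,w-s)=2w^2-(4s^2+4s+6)\,w+(s^4+2s^3+7s^2+6s+4).$$
By the result of \cite{ESc04} recalled above, the strength $\varphi$ of $C$ is a positive integer with $\sigma_2(w,0,\varphi+1)=0$. By Theorem \ref{roos83} we may assume $w\ge5$, so $C\ne V_w^{2w}$ and hence $\varphi\le w-1$; thus $s:=w-\varphi-1$ is a non-negative integer and $2w^2-(4s^2+4s+6)w+(s^4+2s^3+7s^2+6s+4)=0$.

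Next I would read this equation as a quadratic in $w$. A short computation gives its discriminant as $4\bigl(2s^2(s+1)^2+1\bigr)$, so $w\in\mathbb Z$ forces $2\bigl(s(s+1)\bigr)^2+1$ to be a perfect square $k^2$; writing $y:=s(s+1)$ this is the Pell equation $k^2-2y^2=1$, whose non-negative solutions are $(k,y)=(k_n,z_n)$ with $k_n+z_n\sqrt2=(1+\sqrt2)^{2n}$, $n\ge0$. Hence $s(s+1)=z_n$ for some $n\ge0$. Since every product of consecutive integers satisfies $1+4\,s(s+1)=(2s+1)^2$, this shows that $\gamma=1+4z_n=1+\sqrt2\bigl((1+\sqrt2)^{2n}-(1-\sqrt2)^{2n}\bigr)$ is a perfect square, i.e.\ $(c.2)$ holds with $m=n$.

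Finally, solving the quadratic gives $w=\tfrac12\bigl(2z_n+3\pm k_n\bigr)$. For the $+$ sign the identity $(1+\sqrt2)^{2n+1}+(1-\sqrt2)^{2n+1}=2(k_n+2z_n)$ rewrites this as $w=\tfrac14\bigl((1+\sqrt2)^{2n+1}+(1-\sqrt2)^{2n+1}+6\bigr)$, which is $(c.1)$ with $m=n$; for the $-$ sign, using $(1+\sqrt2)^{-1}=\sqrt2-1$ and $(1-\sqrt2)^{-1}=-(1+\sqrt2)$ one checks $2z_n-k_n=\tfrac12\bigl((1+\sqrt2)^{2n-1}+(1-\sqrt2)^{2n-1}\bigr)$, so $w=\tfrac14\bigl((1+\sqrt2)^{2n-1}+(1-\sqrt2)^{2n-1}+6\bigr)$, which is $(c.1)$ with $m=n-1\ge0$ (the residual case $n=0$ with the $-$ sign gives $w=1$, excluded since $w\ge5$). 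In every case $w$ is of the form asserted in $(c.1)$ for some integer $m\ge0$, which together with the previous paragraph completes the proof. I expect the only genuine labor here to be the symbolic expansion of $\sigma_2$ and of the discriminant — routine but error-prone — while the conceptual core (recognizing the Pell equation $k^2-2y^2=1$ and observing that the constraint ``$y$ is a pronic number $s(s+1)$'' is precisely $(c.2)$) is short; the small amount of case-checking for the two sign choices is the only remaining point that requires care.
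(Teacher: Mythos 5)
Your proof is correct, and it takes a genuinely different route from the paper's. The paper quotes the closed-form expression (\ref{2-strength}) for the strength and then imposes integrality in two stages: the inner radical forces $y^2=2w^2-6w+5$, i.e.\ the \emph{negative} Pell equation $(2w-3)^2-2y^2=-1$ (yielding $(c.1)$), and the outer radical forces $4(x\pm y)+1$ to be a square (yielding $(c.2)$). You instead expand $\sigma_2(w,0,w-s)$ directly (your quartic $2w^2-(4s^2+4s+6)w+(s^4+2s^3+7s^2+6s+4)$ checks out against the paper's strength formula), read it as a quadratic in $w$, and let its discriminant $4\bigl(2(s(s+1))^2+1\bigr)$ do all the work: one gets the \emph{positive} Pell equation $k^2-2y^2=1$ with the constraint that $y=s(s+1)$ is pronic, which is exactly $(c.2)$ via $1+4s(s+1)=(2s+1)^2$, while solving the quadratic recovers $(c.1)$. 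The two arguments are exactly parallel (your $+$/$-$ sign choices correspond to the paper's Cases 2 and 1, with $x=2z_n\pm k_n$, $y=k_n\mp(\mp z_n)$), but yours packages both integrality conditions into a single discriminant and makes the squareness in $(c.2)$ transparent rather than a case split; the cost is the symbolic expansion of $\sigma_2$, which the paper avoids by citing the precomputed strength. One caveat, which you share precisely with the paper's own proof: in your $-$ sign case $(c.1)$ holds with $m=n-1$ while $(c.2)$ holds with $m=n$ (just as the paper's Case 1 produces $\gamma_{m+1}$ square while $(c.1)$ uses $m$), so strictly speaking both arguments prove the theorem only with possibly adjacent indices in the two conditions rather than the same $m$; this is a defect of the theorem statement, not of your argument.
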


\begin{proof}

Assume $C$ is a $2$-perfect code in $J(2w,\; w)$. By (\ref{poly})
the strength of such code is
\begin{equation}\frac{1}{2}(-1+2w-\sqrt{8w-11\pm4\sqrt{5-6w+2w^{2}}}).\label{2-strength}
\end{equation}
Hence, there exists an integer $y$ such that
$y^2=5-6w+2w^{2}$ and thus
$(2w-3)^{2}-2y^{2} = -1$.
Let $x=2w-3$, and consider the equation $$x^{2}-2y^{2}=-1.$$
This equation is known as Pell equation~\cite{Pell} and it has a family of solutions given by
\begin{eqnarray}
x & = & \frac{(1+\sqrt{2})^{2m+1}+(1-\sqrt{2})^{2m+1}}{2}\label{eq:x}\\
y & = & \frac{(1+\sqrt{2})^{2m+1}-(1-\sqrt{2})^{2m+1}}{2\sqrt{2}}\label{eq:y}\end{eqnarray}
for some integer $m\geq 0$. Hence
$$w=\frac{(1+\sqrt{2})^{2m+1}+(1-\sqrt{2})^{2m+1}+6}{4},$$
which completes the proof of $(c.1)$.

Equation (\ref{2-strength}) implies also that
\[
\sqrt{8w-11\pm4\sqrt{5-6w+2w^{2}}}\in\mathbb{Z}.\]
We distinguish between two cases:
\begin{description}
  \item[Case 1:] $\sqrt{8w-11+4\sqrt{5-6w+2w^{2}}}\in\mathbb{Z}$.
In this case,  there exists an  integer $\alpha$ such that $\alpha^2=8w-11+4
\sqrt{5-6w+2w^{2}} = 8w-11+4y
= 4(x+y)+1$.
  \item[Case 2:] $\sqrt{8w-11-4\sqrt{5-6w+2w^{2}}}\in\mathbb{Z}.$
In this case,  there exists an  integer $\beta$ such that $\beta^2=
8w-11-4\sqrt{5-6w+2w^{2}} =8w-11-4y=4(x-y)+1$.
\end{description}
In other words, at least one of the expressions $4(x+y)+1$, $4(x-y)+1$  should be a
square of some integer. By (\ref{eq:x}) and (\ref{eq:y}) we obtain:
\begin{eqnarray*}
\alpha^2=4(x+y)+1 = \sqrt{2}((1+\sqrt{2})^{2m+2}-(1-\sqrt{2})^{2m+2})+1,\\
\beta^2=4(x-y)+1 = \sqrt{2}((1+\sqrt{2})^{2m}-(1-\sqrt{2})^{2m})+1,
\end{eqnarray*}
and the theorem is proved.
\end{proof}
We examine now  condition ($c.2$) of Theorem \ref{thm:pell} for 
$1\leq m\leq 10000$. The only values of $m$ which satisfy ($c.2$)
are $0,1$, and $2$, where $\gamma = 1,9$, and $49$, respectively.
The corresponding values of $w$ are $2,5$, and $22$,
respectively. For $w=2$ it  can be readily verified that there is no $2$-perfect
code in $J(4,2)$. It was proved in~\cite{ESc04} that if a $2$-perfect code exists in
$J(2w,w)$, then $w\equiv2,26,\mbox{ or }50(\mbox{mod }60)$. Hence there are no $2$-perfect
codes for $w=5$  and $w=22$. Thus for $1\leq w\leq 1.97\times 10^{7655}$ 
(considering $m=10000$),
there is no $2$-perfect code in $J(2w,w)$.
\vspace{-0.2 cm}

\section{Conclusion}
\label{sec:conclusion}

We have proved the nonexistence of perfect codes in $J(n,w)$ for large set of new parameters. 
Further techniques and some nonexistence results can be found in~\cite{Sil07}.

\vspace{-0.2 cm}


\begin{thebibliography}{99}

\bibitem{MWS79}
  F.\,J.~MacWilliams and N.\,J.\,A.~Sloane,
  \emph{The Theory of Error-Correcting Codes,} North-Holland, Amsterdam, 1977.

\vspace{-0.2 cm}
\bibitem{Del73}
P.~Delsarte, An algebraic approach to association schemes of coding theory,
\emph{Philips J. Res.,}\textbf{10}, 1--97, 1973.
\vspace{-0.2 cm}

\bibitem{Ban77}
E.~Bannai, Codes in bi-partite distance-regular graphs,
\emph{J. London Math. Soc.,} \textbf{2}, 197--202, 1977.
\vspace{-0.2 cm}

\bibitem{Ham82}
P.~Hammond, On the non-existence of perfect codes and nearly perfect codes,
\emph{J. Discrete Math.,} \textbf{39}, 105--109, 1982.	
\vspace{-0.2 cm}

\bibitem{Roos83}
C.~Roos, A note on the existence of perfect constant weight codes,
\emph{J. Discrete Math.,}\textbf{47}, 121--123, 1983
\vspace{-0.2 cm}

\bibitem{Mar92}
W.\,J.~Martin, Completely regular subsets
\emph{Ph. D. dissertation, Univ. Waterloo,} 1992.
 \vspace{-0.2 cm}
\bibitem{Etz96}
T.~Etzion, On the nonexistence of perfect codes in the Johnson scheme,
\emph{SIAM J. Discrete Math.,} \textbf{9}, 201--209, 1996.
\vspace{-0.2 cm}

\bibitem{ESc04}
T.~Etzion and M.~Schwartz, Perfect constant-weight codes,
\emph{IEEE Trans. on Inform. Theory,} \textbf{50}, 2156--2165, 2004.
\vspace{-0.2 cm}

\bibitem{Etz06}
T.~Etzion, Configuration distribution and designs of codes in the Johnson scheme,
\emph{J. of Combinatorial designs,} \textbf{15}, 15--34, 2006.
\vspace{-0.2 cm}

\bibitem{Gor06}
 D.\,M.~Gordon, Perfect single error-correcting codes in the Johnson scheme,
 \emph{IEEE Transactions on Information Theory}, \textbf{52}, 4670--4672, 2006.
\vspace{-0.2 cm}

\bibitem {Pell}
I.~Niven, H.\,S~Zuckerman,
\emph{An Introduction to the Theory of Numbers,} J.~Wiley $\&$ sons inc., New York, 1972.
\vspace{-0.2 cm}

\bibitem{Sil07}
N.~Silberstein, Properties of Codes in the Johnson Scheme,
\emph{M.Sc. Thesis}, 2007, (in 	arXiv:1004.4882).


\end{thebibliography}
\end{document}